\title{
	Integrality Gap of the Configuration LP for Restricted Max-Min Fair Allocation\thanks{Research supported by the Research Grants Council, Hong Kong, China (project no.~16200317).}
}
\author{Siu-Wing Cheng  \and Yuchen Mao}
\date{Department of Computer Science and Engineering, HKUST, Hong Kong,\\
	\texttt{\{scheng, ymaoad\}@cse.ust.hk}}
\newcommand{\eps} {\varepsilon}
\newenvironment{emromani} {%
\begin{enumerate}}
{\end{enumerate}}
\newtheorem{claim}{Claim}
\newtheorem{theorem}{Theorem}
\newtheorem{lemma}{Lemma}
\begin{document}

\maketitle

\begin{abstract}
	The max-min fair allocation problem seeks an allocation of resources to players that maximizes the minimum total value obtained by any player.  Each player $p$ has a non-negative value $v_{pr}$ on resource $r$.  In the restricted case, we have $v_{pr}\in \{v_r, 0\}$. That is, a resource $r$ is worth value $v_r$ for the players who desire it and value 0 for the other players.   In this paper, we consider the configuration LP, a linear programming relaxation for the restricted problem.  The integrality gap of the configuration LP is at least $2$.  Asadpour, Feige, and Saberi proved an upper bound of $4$.  We improve the upper bound to $23/6$ using the dual of the configuration LP.  Since the configuration LP can be solved to any desired accuracy $\delta$ in polynomial time, our result leads to a polynomial-time algorithm which estimates the optimal value within a factor of $23/6+\delta$.
\end{abstract}

\section{Introduction}

\paragraph*{Background.} The \emph{max-min fair allocation} problem is to distribute resources to players in a fair manner.  There is a set $P$ of $m$ players and a set $R$ of $n$ indivisible resources.  Resource $r \in R$ is worth a non-negative value $v_{pr}$ for player $p$.  An allocation is a partition of $R$ into disjoint subsets $\{\textrm{$C_p$ : $p \in P$}\}$, i.e., player $p$ is given the resources in $C_p$.  The value of an allocation is $\min_{i \in P} \sum_{r \in C_p} v_{pr}$.  The goal is to find an allocation with maximum value.  

Bez\'{a}kov\'{a} and Dani~\cite{BD05} proved that no polynomial-time algorithm can offer an approximation ratio smaller than 2 unless $\mathcal{P} = \mathcal{NP}$. They also showed that the assignment LP has an unbounded integrality gap.  Bansal and Sviridenko~\cite{BS06} proposed a stronger LP relaxation, the configuration LP, and showed that the integrality gap is at least $\Omega(\sqrt{m})$.  They also showed that the configuration LP can be solved to any accuracy $\delta >0$ in polynomial time.  Asadpour and Saberi~\cite{AS07} later proved that the integrality gap is at most $O(\sqrt{m}\log^3{m})$ by proposing a polynomial-time rounding scheme. Saha and Srinivasan~\cite{SS10} improved the bound to $O(\sqrt{m\log m})$. 

We focus on the restricted case in this paper.  In the restricted case, every resource $r \in R$ has a value $v_r$ such that $v_{pr} \in \{v_r,0\}$.  That is, if player $p$ desires resource $r$, then $v_{pr} = v_r$; otherwise, $v_{pr} = 0$.  The approximation ratio is still at least 2 in the restricted case unless $\mathcal{P} = \mathcal{NP}$~\cite{BD05}.  The configuration LP turns out to be a useful tool for the restricted case.  A simple instance can show that the integrality gap of the configuration LP is at least $2$~\cite{JR17}.  
Bansal and Sviridenko~\cite{BS06} obtained an $O\bigl(\frac{\log\log m}{\log\log\log m}\bigr)$-approximation algorithm by rounding the configuration LP.  Feige~\cite{F08} proved that the integrality gap of the configuration LP is bounded by a constant although the constant factor is large and unspecified.  His proof was later made constructive by Haeupler, Saha, and Srinivasan~\cite{HSS11}, and results in a constant-approximation algorithm.  Inspired by Haxell's technique for hypergraph matching~\cite{H95}, Asadpour, Feige and Saberi~\cite{AFS12} proved that the integrality gap of the configuration LP is at most 4, but their construction of the corresponding approximate allocation is not known to be in polynomial time.  By extending the idea of Asadpour et al., Annamalai, Kalaitzis and Svensson~\cite{AKS17} developed a polynomial algorithm to construct an allocation with an approximation ratio of $6 + 2\sqrt{10} + \delta$ for any constant $\delta > 0$.  Their analysis involves the configuration LP. Recently, we improved the approximation ratio to $6 + \delta$ for any constant $\delta > 0$ using a pure combinatorial analysis~\cite{CM18}.

Chan, Tang, and Wu~\cite{CTW16} considered an even more constrained \emph{$(1,\eps)$-restricted case} in which each resource $r$ has value $v_r \in \{1,\eps\}$ for some fixed constant $\eps \in (0,1)$.  They showed that the integrality gap of the configuration LP in this case is at most 3.

The configuration LP is also helpful for the restricted case of the classic problem {\sc scheduling on unrelated machines to minimize makespan}, in which people want to minimize the maximum load of any machine.  In a seminal paper, Lenstra, Shmoys, and Tardos~\cite{LST90} proposed a 2-approximation algorithm for this scheduling problem by rounding the assignment LP.  They also showed that no polynomial time algorithm can do better than $3/2$ unless $\mathcal{P} = \mathcal{NP}$.  An important open question is whether this can be improved.  Inspired by the techniques used in~\cite{AFS12}, Svensson~\cite{S12} showed that the integrality gap of configuration LP for the restricted case of the scheduling problem mentioned above is at most $33/17$.  Jansen and Rohwedder~\cite{JR17} improved the bound to $11/6$.

\paragraph*{Our contributions and techniques.} In~\cite{AFS12}, Asadpour, Feige and Saberi construct the approximate allocation by building a hypergraph matching. The value of hyperedges (that is, the total value of resources covered by an edge) plays an important role in the analysis. They use the maximum possible value to bound the value of every hyper-edge that is chosen by their construction algorithm.  This is pessimistic because not all edges can reach this maximum value.  Our innovation lies in using the dual of the configuration LP to amortize the values of addable edges.  When constructing the dual solution, we scale down the value of large resources.  This reduces the (dual) value of the edges that contain lots of large resources.  Our main result is the following.

\begin{theorem}
	For the restricted max-min fair allocation problem, the integrality gap of the configuration LP is at most $23/6$.
\label{thm:main}
\end{theorem}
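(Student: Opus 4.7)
The plan is to adopt the hypergraph-matching framework of Asadpour, Feige, and Saberi~\cite{AFS12} for constructing an approximate allocation, but to sharpen their analysis by replacing the worst-case per-edge value bound with an amortized bound derived from the dual of the configuration LP. Fix $\beta = 23/6$ and suppose the configuration LP has optimum $T$. Call a resource \emph{big} if $v_r \ge T/\beta$ and \emph{small} otherwise; a player is \emph{satisfied} once she owns a single big resource or a bundle of small resources with total value at least $T/\beta$. Starting from a partial allocation satisfying all but one player, I would apply the Haxell-style layered procedure that repeatedly identifies a minimal \emph{addable edge}, namely a set of resources whose reassignment satisfies one more player without unsatisfying anyone. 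If the procedure terminates without success, the collected edges form a blocking certificate, and the goal is to derive a contradiction with LP duality.

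The novelty is to turn this certificate into a dual witness of objective value greater than $T$. Recall that the configuration LP dual places $y_p \ge 0$ on players and $z_r \ge 0$ on resources, imposes $y_p \le \sum_{r \in C} z_r$ for every configuration $C$ of value at least $T$ for $p$, and has objective $\sum_p y_p - \sum_r z_r \le T$. For each player $p$ appearing on the certificate I would set $y_p$ proportional to the value of her current bundle, and for each resource $r$ appearing on the certificate I would set a baseline $z_r$. The crucial modification is to cap $z_r$ at a carefully tuned threshold whenever $r$ is big; this scales down the dual contribution a big resource can make to any configuration and directly realizes the amortization of edges that are dense in big resources.

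Two verifications then need to line up. Feasibility: for every player $p$ and every configuration $C$ with $v_p(C) \ge T$, one must verify $\sum_{r \in C} z_r \ge y_p$. If $C$ were rich in small resources outside the certificate it would itself supply an addable edge, so the stuck condition forces the small-resource part of $C$ to lie mostly inside the certificate where $z_r = v_r$; if $C$ relies on big resources, the cap is still large enough to cover the bounded $y_p$. Objective: summing $y_p - \sum_r z_r$ across the certificate must strictly exceed $T$, contradicting LP duality and thus the stuck assumption. The scaling constant on big resources and the threshold separating big from small are chosen precisely so that the worst trade-off in these two checks yields the ratio $23/6$.

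The main difficulty will be tuning these two parameters simultaneously against three competing worst cases: a certificate edge dominated by a single big resource, an edge composed entirely of small resources, and mixed edges that share a big resource across several certificate layers. Handling shared big resources without double-counting in the dual, while still keeping $z_r$ small enough to deliver the intended amortization, is where the analysis becomes delicate, and where I expect the specific constant $23/6$ to emerge, improving on the cruder bound $4$ of~\cite{AFS12}.
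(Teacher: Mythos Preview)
Your overall plan---run the Asadpour--Feige--Saberi layered procedure and, when it stalls, exhibit a dual solution to the configuration LP that contradicts feasibility---is exactly the paper's route. But the concrete dual you sketch is aimed at the wrong target, and as written it will not produce $23/6$.

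First, the paper does \emph{not} set $y_p$ proportional to the value of the player's current bundle; every active player gets the same constant $y_p^* = 1 - \tfrac{4}{3}\lambda$ (with $\lambda = 6/23$), and inactive players get $0$. Second, and more importantly, the decisive cap is not on big (fat) resources. Fat resources receive the fixed dual value $1 - \tfrac{4}{3}\lambda$, which is harmless because a fat addable edge and its unique blocking edge involve exactly one fat resource. The bottleneck case is a \emph{thin} addable edge $x_i$ blocked by a single thin edge $e_i$ when every resource they cover has value in $(\lambda/2,\lambda)$: minimality forces each of $x_i,e_i$ to cover at most two resources, hence together at most three, and with $z_r = v_r$ their total dual weight could approach $3\lambda$, which exceeds $1 - \tfrac{4}{3}\lambda$. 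The fix is to cap the dual value of \emph{thin} resources at $\tfrac{5}{6}\lambda$, i.e.\ set $z_r^* = \min\{v_r,\tfrac{5}{6}\lambda\}$ for thin $r$ in the certificate. This brings the three-resource case down to $3\cdot\tfrac{5}{6}\lambda = \tfrac{5}{2}\lambda = 1 - \tfrac{4}{3}\lambda$, while feasibility survives because any configuration containing at most two thin resources above $\tfrac{5}{6}\lambda$ loses at most $\tfrac{1}{3}\lambda$ to the cap, and one containing three or more such resources already has $\sum z_r \ge \tfrac{5}{2}\lambda$. The constant $23/6$ falls out of the identity $\tfrac{5}{2}\lambda = 1 - \tfrac{4}{3}\lambda$.

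So your worry about ``shared big resources across layers'' and double-counting is misplaced; the delicate case is entirely within thin edges, and capping fat resources does nothing for it. Redirect the amortization to the large-thin resources and set $y_p$ uniformly, and the two checks (feasibility via the three-case split on configurations, positivity via the per-blocker inequality $\sum_{p\in P_i} y_p^* \ge \sum_{r\in R_i} z_r^*$) go through.
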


\paragraph*{Configuration LP and its dual.}
For player $p$, we define $R_p = \{\text{$r \in R$ : $v_{pr} > 0$}\}$, that is, $R_p$ is the set of resources player $p$ is interested in.  For a set $S$ of resources, we define its total value $value(S) = \sum_{r\in S}v_r$. 

Let $T$ be our target value. A \emph{configuration} for a player $p$ is any subset $S \subseteq R_p$ such that $value(S) \geq T$.  The set of all configurations for player $p$ is defined as 
\begin{equation*}
	{\cal C}_p(T) = \{\text{$S\subseteq R_p$ : $\mathit{value}(S) \geq T$}\}.
\end{equation*}
Given a target $T$, the configuration LP, denoted as $\mathit{CLP}(T)$, associate a variable $x_{p,C}$ for each player $p$ and each configuration $C$ in ${\cal C}_p(T)$. Its constraints ensure that each player receives at least $1$ unit of configurations, 
and that for any $r\in R$, the total unit of $r$ that are assigned to players is at most 1.
The optimal value of the configuration LP is the largest $T$ for which $CLP(T)$ is feasible. We denote this optimal value by $T^*$.  Although the configuration LP can have an exponential number of variables, it can be approximately solved to an arbitrary accuracy $\delta$ in polynomial time~\cite{BS06}.

The configuration LP is given on the left of Figure~\ref{fig:lp}. Although there is no objective function to optimize, one can view it as a minimization of a constant objective function. On the right of the figure is the dual LP~\cite{BS06}.  If the configuration LP is feasible, then the optimal value of its dual must be bounded from above.  

\begin{figure}
	\begin{center}
		\begin{minipage}[t][11em][t]{0.4\textwidth}
		\centering\textbf{Primal}
		\begin{alignat*}{3}
    	&\quad 	& \sum_{C\in {{\cal C}_p(T)}}x_{p,C} &\geq 1
				& \quad &\forall p\in P\\
		&		&\sum_{p\in P}\sum_{C\in {\cal C}_p(T): r\in C} x_{p, C} &\leq 1
				& \quad &\forall r\in R\\
		&		& x_{p, C}&\geq 0
		\end{alignat*}
		\end{minipage}
		\hfill
		\begin{minipage}[t][11em][t]{0.5\textwidth}
		\centering\textbf{Dual}
		\begin{alignat*}{3}
		\max&\quad & &\sum_{p\in P} y_p - \sum_{r \in R} z_r\\
		s.t.&\quad & y_p&\leq \sum_{r\in C}z_r
			& \quad &\forall p\in P, \forall C\in {\cal C}_p(T)\\
			&		& y_p&\geq 0 & \quad &\forall p\in P\\
			&		& z_r&\geq 0 & \quad &\forall r\in R
		\end{alignat*}
		\end{minipage}
	\end{center}
	\caption{The configuration LP and its dual.}
	\label{fig:lp}
\end{figure}

\section{Hypergraph Matchings}
Without loss of generality, we assume that $T^* = 1$. To prove our upper bound of $23/6$ on the integrality gap, we aim to find an allocation in which every player receives at least $\lambda$ value of resources, where $\lambda = 6/23$.

We call a resource $r$ \textbf{fat} if $v_r \geq \lambda$, and \textbf{thin}, otherwise.  Recall that $R_p$ is the set of resources in which player $p$ is interested.   We use $R^f_p$ and $R^t_p$ to denote the sets of fat and thin resources, respectively, in which player $p$ is interested.  Note that a player $p$ will be satisfied if it receives a single fat resource from $R^f_p$ or a bunch of thin resources of total value at least $\lambda$ from $R^t_p$.

Consider the following bipartite hypergraph $G = (P \cup R, E)$.  There are two kinds of edges in $G$: \emph{fat edges} and \emph{thin edges}.  For each $p\in P$ and each $r\in R^f_p$, $(p,\{r\})$ is a fat edge of $G$.  For each $p\in P$ and each \emph{minimal} subset $S$ of $R^t_p$ such that $\mathit{value}(S)\geq \lambda$, $(p, S)$ is a thin edge of $G$.  By minimality, we mean that deleting any resource from $S$ will result in $\mathit{value}(S) < \lambda$.  For any edge $e = (p, S)$ of $G$, we say $p$ and resources in $S$ are covered by $e$. A player (resources) is covered by a subset of edges in $G$ if it is covered by some edge in this subset.

A matching of $G$ is a set of edges that do not share any player or any resource.  A perfect matching of $G$ is a matching that covers all players.  A perfect matching of $G$ naturally induces an allocation in which every player receives at least $\lambda$ worth of resources.  To prove Theorem~\ref{thm:main}, all we need to do is to show that $G$ has a perfect matching.

\begin{theorem}
	The bipartite hypergraph $G$ has a perfect matching.
	\label{thm:matching}
\end{theorem}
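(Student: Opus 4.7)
The plan is to argue by contradiction, following the Haxell/Asadpour--Feige--Saberi framework of growing an alternating tree of addable hyperedges, but tightening the amortization step via the dual LP of $\mathit{CLP}(1)$.

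First, suppose $G$ has no perfect matching. Pick a matching $M$ of $G$ that is extremal in a carefully chosen lexicographic sense (e.g.\ maximum cardinality, then fewest thin edges, then some ordering on sizes), so that any ``local swap'' one might attempt is blocked. Since $M$ is not perfect, some player $p_0$ is uncovered. Root an alternating tree at $p_0$. At each stage, examine the edges incident to an already-exposed player $p$: an edge $e = (p, S)$ is \emph{addable} if every resource in $S$ is either free or currently matched by $M$ to a player already appearing in the tree. If we ever find an addable edge whose blocking matched edges can be peeled off in sequence, we get an augmentation that contradicts the extremality of $M$. Otherwise the tree must stabilize into a finite structure: a set $P'$ of exposed players together with a set $R'$ of resources such that every configuration in ${\cal C}_p(1)$ for $p \in P'$ is obstructed by resources in $R'$, each of which is matched by $M$ to another player in $P'$.

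Second, translate this combinatorial obstruction into an infeasible-by-duality certificate for $\mathit{CLP}(1)$. For each exposed player $p \in P'$ set $y_p$ to a common positive value (normalized using the number of exposed players and the levels of the tree), and for each resource $r$ set $z_r = 0$ off of $R'$. The novelty is the choice of $z_r$ on $R'$: for a \emph{thin} resource take $z_r$ proportional to $v_r$, but for a \emph{fat} resource scale $z_r$ down to roughly $\lambda$ (or an interpolation thereof). This is sound because a single fat resource already fills a whole configuration, so paying its full value in the dual is wasteful; discounting it keeps dual feasibility $y_p \le \sum_{r\in C} z_r$ intact for every $C \in {\cal C}_p(1)$ that consists entirely of thin resources (since then $\sum_{r\in C} z_r \ge 1$) and also for configurations containing fat resources (since $|C \cap R^f_p|$ fat resources already contribute $\approx |C \cap R^f_p|\cdot \lambda$ in the dual, enough to cover $y_p$). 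The level-by-level counting in the alternating tree then lower-bounds $\sum_{p \in P'} y_p - \sum_{r \in R'} z_r$ as a positive quantity, contradicting feasibility of the primal and thus the assumption.

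The pivotal calibration is the choice of $\lambda$ and of the fat-resource discount. With no discount one recovers the AFS bound of $4$; the improvement to $23/6$ comes from observing that, in the stable alternating tree, each matched fat resource in $R'$ ``pays for'' exactly one exposed player, whereas thin matched edges amortize over the $\lambda$-worth of resources they cover. Balancing these two contributions under the constraint that every $C \in {\cal C}_p(1)$ with $\mathit{value}(C) \ge 1$ must satisfy the dual inequality drives an arithmetic inequality whose threshold is $\lambda = 6/23$.

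The main obstacle is precisely this calibration: the dual constraint $y_p \le \sum_{r\in C} z_r$ must hold for \emph{every} configuration, including mixed ones with an adversarial blend of fat and thin resources and including configurations that use resources outside $R'$ (for which $z_r = 0$). Handling the latter requires that the tree be grown until any candidate configuration for an exposed player draws enough of its value from $R'$; this is the Haxell-style ``greedy expansion until saturation'' step, and it is where the minimality built into the thin edges of $G$ is essential, since it caps the excess value each thin edge in the matching can carry beyond $\lambda$ and therefore controls how densely $R'$ must be populated to block every configuration.
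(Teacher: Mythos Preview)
Your overall scaffolding---a Haxell-style alternating structure plus a dual certificate for $\mathit{CLP}(1)$---matches the paper's, but the dual assignment you propose is not the one that yields $\lambda=6/23$, and as written it does not even verify. You set fat $z_r\approx\lambda$ and thin $z_r\propto v_r$. Then a configuration $C$ containing a single fat resource from $R'$ contributes only about $\lambda$ to $\sum_{r\in C}z_r$, which forces $y_p\lesssim\lambda$. With $y_p$ that small the per-blocker inequality $\sum_{p\in P_i}y_p\ge\sum_{r\in R_i}z_r$ fails for a thin blocker: a thin addable edge $x_i$ together with a single blocking edge $e_i$ can cover three thin resources each of value just below $\lambda$, giving $\sum_{r\in R_i}z_r$ close to $3\lambda>y_p$. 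Your ``balancing'' sketch does not address this case, and nothing in it isolates $6/23$. (Your definition of \emph{addable} is also inverted relative to what the argument needs: the paper requires the new edge's resources to be \emph{disjoint} from everything already in the blocker structure, not already matched into it.)

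The paper's dual is different precisely at this point. It takes $y_p=1-\tfrac{4}{3}\lambda$ for active $p$, fat $z_r=1-\tfrac{4}{3}\lambda$ (so one fat resource already meets $y_p$), and---this is the actual innovation---thin $z_r=\min\{v_r,\tfrac{5}{6}\lambda\}$. The cap $\tfrac{5}{6}\lambda$ on \emph{large thin} resources is what tames the three-resource case above on the objective side (at most $3\cdot\tfrac{5}{6}\lambda=\tfrac{5}{2}\lambda$), while on the feasibility side a configuration that meets $R'$ in at least three capped thin resources still contributes at least $\tfrac{5}{2}\lambda$, and the remaining feasibility cases are handled by the value bound $>1-\lambda$ inside $R'$ minus a total truncation loss of at most $2\cdot\tfrac{\lambda}{6}$. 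Both tight constraints read $\tfrac{5}{2}\lambda=1-\tfrac{4}{3}\lambda$, i.e.\ $\lambda=6/23$. In short, you discount the wrong resources (fat instead of large thin), so the mechanism that produces $23/6$ is absent from the proposal.
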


To prove Theorem~\ref{thm:matching}, it suffices to show that given any matching $M$ of $G$,
if $|M| < m$, we can always increase the matching size by $1$.  In this section, we provide a local search algorithm to achieve this.  This algorithm works in the same spirit as the one in~\cite{AFS12}.

Let $M$ be a matching of $G$ such that $|M| < m$.  Let $p_0$ be a player that is not yet matched by $M$.  The algorithm will return a matching $M'$ such that $M'$ matches $p_0$ and all players matched by $M$.

To match $p_0$, we pick an edge $(p_0, S_0)$ from $G$.  If no resource in $S_0$ is used by $M$, we simply match $p_0$ using this edge.  Assume that some edge in $M$, say $(q, S)$, uses some resource in $S_0$.  We say that $(q, S)$ \textbf{blocks} $(p_0, S_0)$ in this situation.  In order to free up the resources in $S$ and make $(p_0, S_0)$ unblocked, we need to find a new edge to match $q$.  Now $q$ has a similar role as $p_0$. 

For the convenience of analysis, we use terminologies that are similar to those in~\cite{JR17} rather than~\cite{AFS12}.  Our algorithm maintains a sequence $(B_1, \ldots, B_{\ell})$ of blockers in chronological order of their addition.  That is, $B_i$ is added before $B_j$ for $i < j$.  A \textbf{blocker} $B_k$ is a tuple $(x_k, Y_k)$ where $x_k$ is an edge we hope to add to the matching, and $Y_k$ is the subset of edges in $M$ that prevent us from doing so, i.e., share resources with $x_k$. We call edges in $Y_k$ \textbf{blocking edges}.  We say a blocker $B_k = (x_k, Y_k)$ is \textbf{removable} if $Y_k = \emptyset$.  A player $p$ is \textbf{active} if $p$ is covered by some blocking edge (from $Y_k$) in some blocker $B_k$, and we say that blocker $B_k$ \textbf{activates} player $p$.  We will prove in Lemma 1 that the blocking edges in the blockers do not share any players, so an active player is activated by exactly one blocker. The player $p_0$ is always active until it is matched (and the algorithm terminates).  For a sequence $(B_1, \ldots, B_{\ell})$ of blockers, we use $\mathit{resource}(B_{\leq \ell})$ to denote the set of the resources covered by $\{x_1, \ldots, x_\ell\} \cup Y_{\leq \ell}$ where $Y_{\leq \ell} = \bigcup_{i = 1}^{\ell}Y_i$.

The sequence of blockers is built inductively. Initially, only $p_0$ is active, and the sequence is empty.  Let $\ell$ be the number of blockers in the sequence.  Consider the construction of the $(\ell+1)$-th blocker.  We call an edge $(q, S)$ \textbf{addable} if player $q$ is active and $S\cap \mathit{resource}(B_{\leq \ell}) = \emptyset$.  We arbitrarily pick an addable edge as $x_{\ell+1}$.  Then we add to $Y_{\ell+1}$ all the edges from $M$ that share some resource(s) with $x_{k+1}$. The pair $(x_{\ell+1}, Y_{\ell+1})$ form a new blocker $B_{\ell+1}$.  

After constructing a new blocker, the algorithm checks whether some blocker is removable.  If so, the algorithm will remove the removable blockers as well as some other blockers, and update $M$.  When there is no removable blocker, the algorithm resumes the construction of the next new blocker.

The algorithm alternates between the construction and the removal of blockers until $p_0$ is matched.  The construction and the removal are carried out by the routines {\sc Build} and {\sc Contract} specified in the following.

\vspace{12pt}

\begin{quote}
	{\sc Build}$\left(M, \left(B_1, \ldots, B_{\ell}\right)\right)$
	\begin{enumerate}
		\item Arbitrarily pick an addable edge $x_{\ell+1}$.
		\item Let $Y_{\ell + 1} = \{\text{$e \in M$ : $e$ shares some resources with $x_{l+1}$}\}$.
		\item Append a new blocker $B_{\ell + 1} = (x_{\ell+1}, Y_{\ell + 1})$ to the sequence. $\ell := \ell + 1$.
	\end{enumerate}

	\vspace{8pt}

	{\sc Contract}$(M, (B_1, \ldots, B_{\ell}))$
	\begin{enumerate}
		\item Let $B_k = (x_k, \emptyset)$ be the removable blocker with the smallest index. 
		Let $q$ be the player covered by $x_k$. If $q = p_0$, we update $M := M\cup \{x_k\}$, and the algorithm terminates.  Assume $q \neq p_0$.  Let $B_j = (x_j, Y_j)$ be the blocker that activates $q$.  Note that $j < k$.  Let $e = (q, S)$ be the edge in $Y_j$ that covers $q$.

		\item Update $M := (M\setminus\{e\})\cup \{x_k\}$.  Before the update, $q$ is matched by the edge $e$, and after the update, $q$ is matched by the edge $x_k$. The other players are not affected.

		\item Since $e$ is removed from the matching, it no longer blocks $x_j$. Update $Y_j := Y_j \setminus \{e\}$.

		\item Delete all the blockers that succeed $B_j$, that is, $B_i$ with $i > j$. Set $\ell := j$. 
\end{enumerate}
\end{quote}

\section{Analysis}
We list a few invariants below that are maintained by the algorithm.
\begin{lemma}
Let $M$ be current matching. Let $(B_1, \ldots, B_{\ell})$ be the current sequence of blockers where $B_i = (x_i, Y_i)$. Then the followings hold.
\begin{emromani}[(i)]
	\item For $i_1\neq i_2$, $x_{i_1}$ and $x_{i_2}$ do not share any resource.
	\item For $i\in [1,\ell]$, $x_i$ is blocked by every edge in $Y_i$, but not by any edge in $M\setminus Y_i$.
	\item $Y_1, \ldots, Y_\ell$ are mutually disjoint subsets of $M$.
\end{emromani}
\label{lem:invar}
\end{lemma}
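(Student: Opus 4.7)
The plan is to establish all three invariants simultaneously by induction on the number of \textsc{Build} and \textsc{Contract} operations performed. The base case is trivial: before any operation, the sequence of blockers is empty, so (i), (ii) and (iii) all hold vacuously. For the inductive step, I would handle the two routines separately and verify that each preserves every invariant.

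For \textsc{Build}, suppose the invariants hold for the current sequence $(B_1,\ldots,B_\ell)$ and consider appending $B_{\ell+1}=(x_{\ell+1},Y_{\ell+1})$. Invariant (ii) for $i=\ell+1$ follows immediately from step 2, which puts into $Y_{\ell+1}$ exactly the edges of $M$ that share a resource with $x_{\ell+1}$. For (i) and (iii), the key fact is that $x_{\ell+1}$ is addable, hence its resource set is disjoint from $\mathit{resource}(B_{\leq\ell})$. Since this set contains the resources of every $x_i$ and of every edge in every $Y_i$ for $i\leq\ell$, it follows that $x_{\ell+1}$ shares no resource with any earlier $x_i$ (giving (i)), and no edge already in some $Y_i$ with $i\leq\ell$ can lie in $Y_{\ell+1}$ (giving (iii)). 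The cases of (i), (ii), (iii) for indices $\leq\ell$ are inherited unchanged from the hypothesis.

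For \textsc{Contract}, the update swaps $e$ out of $M$ and $x_k$ in, shrinks $Y_j$ by removing $e$, and truncates the blocker sequence at index $j$. Invariants (i) and (iii) are inherited, since we only shorten the list of $x_i$'s and shrink $Y_j$. The delicate invariant is (ii), which must be checked for every retained $i\leq j$ against the new matching $M'=(M\setminus\{e\})\cup\{x_k\}$. The only potentially new blocking edge is $x_k$, but invariant (i) for the pre-contraction sequence says $x_k$ shares no resource with any $x_i$ for $i<k$, and in particular for $i\leq j<k$, so $x_k$ cannot block any retained $x_i$. Edges that were in $M\setminus Y_i$ and remain in $M'$ continue not to block $x_i$ by the hypothesis, and edges in $Y_i$ for $i<j$ are unaffected since they are disjoint from $Y_j$ by the old (iii) and so remain in $M'$. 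For $i=j$, the set of edges of $M'$ that block $x_j$ is exactly $Y_j\setminus\{e\}$, which is the new $Y_j$, as required.

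The main obstacle is the bookkeeping in \textsc{Contract}: one must track which edge leaves $M$, which enters, and how the blocker list is truncated, and then see that invariant (i) for the pre-contraction sequence is precisely the lever that rules out the newly inserted $x_k$ as a blocker of any retained $x_i$. Once that observation is isolated, the rest is a routine unpacking of the definitions of addability and $\mathit{resource}(B_{\leq\ell})$.
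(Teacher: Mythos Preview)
Your proposal is correct and follows essentially the same approach as the paper: induction over the sequence of \textsc{Build} and \textsc{Contract} operations, with the addability condition driving the \textsc{Build} step and the pre-contraction invariant~(i) used to rule out the newly inserted $x_k$ as a blocker in the \textsc{Contract} step. Your treatment of the \textsc{Contract} case is in fact slightly more careful than the paper's, which contains a minor misstatement (it says $e$ ``does not block $x_j$'', whereas $e$ does block $x_j$ but is simply no longer in the matching).
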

\begin{proof}
	We prove these invariants by induction.  At the beginning of the algorithm, all the invariants trivially hold.  Suppose that all the invariants hold before invoking {\sc Build} or {\sc Contract}, we show that they hold afterwards.

	\vspace{1em}

	{\sc Build}: Let $M$ and $(B_1, \ldots, B_\ell)$ be the matching and the sequence of blockers before calling {\sc Build}.  Let $B_{\ell+1} = (x_{\ell+1}, Y_{\ell+1})$ be the new blocker added to the sequence during {\sc Build}.  By the definition of addable edges, $x_{\ell+1}$ does not share any resource with any of $x_1, \ldots, x_{\ell}$, so invariant (i) is preserved.  Every edge in $M$ that blocks $x_{\ell + 1}$ is added to $Y_{\ell+1}$, so $x_{\ell+1}$ is not blocked by any edge in $M\setminus Y_{\ell+1}$, so invariant (ii) is preserved.  Consider invariant (iii).  Clearly $Y_1, \ldots, Y_{\ell+1}$ are subsets of $M$.  $Y_{\ell+1}$ is the only one that may break disjointness.  By the definition of addable edges, $x_{\ell+1}$ is not blocked by any edge in $Y_1, \ldots, Y_{\ell}$.  By invariant (ii), $x_{\ell + 1}$ is blocked by every edge in $Y_{\ell+1}$.  Therefore, $Y_{\ell+1}$ must be disjoint from $Y_1, \ldots, Y_\ell$. Invariant (iii) is preserved.

	\vspace{1em}

	{\sc Contract}: Let $M$ and $(B_1, \ldots, B_\ell)$ be the matching and the sequence of blockers before calling {\sc Contract}.  Let $B_k = (x_k, Y_k)$ be the removable blocker with the smallest index.  Let $B_j = (x_j, Y_j)$ be the blocker that activates the player covered by $x_k$. 
	Let $e$ be the edge in $Y_j$ that is removed from $M$ when adding $x_k$ to $M$.  Let $Y'_j = Y_j \setminus \{e\}$, $B'_j = (x_j,Y'_j)$, and $M' = (M \setminus \{e\}) \cup \{x_k\}$.  We shall prove that all the invariants hold for $M'$ and $(B_1, \ldots, B_{j-1}, B'_j)$. 
	Since $x_1, \ldots, x_j$ do not change, invariant (i) still holds. Consider invariant (ii).  For $i \in [1, j-1]$, $x_i$ and $Y_i$ do not change, so $x_i$ is still blocked by every edge in $Y_i$.  For $x_j$ and $Y'_j$, since $Y'_j$ is obtained from $Y_j$ by removing the edge $e$ that does not block $x_j$, so $x_j$ is blocked by every edge in $Y'_j$.  Comparing to $M$, $M'$ gains a new edge $x_k$. By inductive assumption, $x_k$ does not share any resource with $x_1,\ldots,x_j$, so $x_k$ does not block any of them.  So invariant (ii) is preserved.  $Y_1, \ldots, Y_{j-1}$ do not change, and $Y'_j$ is a subset of $Y_j$. So by inductive assumption, $Y_1, \ldots, Y_{j-1}, Y'_j$ are mutually disjoint subsets of $M'$. Therefore, invariant (iii) is preserved. 
\end{proof}

To prove that the algorithm eventually extends $M$ to match $p_0$, we need to show two things.  First, we show that, at any time before $p_0$ is matched, there is either a removable blocker or an addable edge.  In the former case, we can call {\sc Contract} to remove some blockers, and in the latter case, we can call {\sc Build} to add a new blocker.  Second, we show that the algorithm terminates after a finite number of steps.

\begin{lemma}
	At any time, there is either a removable blocker or an addable edge.
\label{lem:never-stuck}
\end{lemma}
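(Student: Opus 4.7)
The plan is to argue by contradiction via LP duality. Suppose that at some moment the algorithm has neither a removable blocker (so $Y_i \neq \emptyset$ for every $i$) nor an addable edge. I will construct a feasible solution $(y, z)$ to the dual of $\mathit{CLP}(1)$ whose value $\sum_p y_p - \sum_r z_r$ is strictly positive. Since $T^* = 1$ makes the primal $\mathit{CLP}(1)$ feasible and its objective is the constant $0$, weak duality forces every dual feasible value to be at most $0$, and the strictly positive dual value is the desired contradiction.

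Write $R^* := \mathit{resource}(B_{\leq \ell})$ and let $A$ be the set of active players. The ``no addable edge'' hypothesis implies, for every $p \in A$: every $r \in R^f_p$ lies in $R^*$ (else the fat edge $(p,\{r\})$ is addable) and every minimal $S \subseteq R^t_p$ with $\mathit{value}(S)\geq\lambda$ meets $R^*$ (else the thin edge $(p,S)$ is addable). These yield the coverage bound: for any $C \in \mathcal{C}_p(1)$ with $p \in A$, either $C$ contains a fat resource which then lies in $R^*$, or $C$ is entirely thin with $\mathit{value}(C \cap R^*) > 1 - \lambda$. The latter is because otherwise $C\setminus R^*$ would hold at least $\lambda$ worth of thin resources, from which a minimal $\geq \lambda$ thin subset could be carved to form an addable edge; the inequality $1 - \lambda \geq \lambda$ used here holds since $\lambda = 6/23 \leq 1/2$.

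The dual solution embodies the paper's idea of scaling down large resources. Set $y_p = 1$ for $p \in A$ and $y_p = 0$ otherwise; set $z_r = 1$ for every fat $r \in R^*$ (this is the scaling down: the naive weight $v_r/\lambda$ can be far larger than $1$); set $z_r = v_r/(1-\lambda)$ for every thin $r \in R^*$; and $z_r = 0$ for $r \notin R^*$. Dual feasibility follows immediately from the coverage bound: a configuration of an active player containing a fat resource in $R^*$ already accumulates $z_r = 1$, while an entirely thin configuration accumulates $\mathit{value}(C \cap R^*)/(1-\lambda) > 1$; inactive players trivially satisfy all constraints.

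It remains to establish $\sum_p y_p - \sum_r z_r > 0$. One has $|A| = 1 + \sum_i |Y_i|$ by Lemma~\ref{lem:invar}(iii) together with the matching property of $M$: the ``$+1$'' accounts for $p_0$ and each edge in $\bigcup_i Y_i$ activates a distinct new player. Fat blockers cancel in the dual objective: whenever $x_i$ is a fat edge, $Y_i$ is forced to consist of the single matching edge covering the same fat resource, so its active-player contribution ($|Y_i|=1$) exactly balances its fat-resource cost ($z_r=1$). For each thin $x_i = (p_i, S_i)$, I amortise the resource cost using the minimality built into every thin edge: for each $e = (q, S_e) \in Y_i$ pick a representative $r_e \in S_e \cap S_i$ (possible since $e$ blocks $x_i$, and distinct across $e$ because $M$ is a matching); minimality of $S_e$ then gives $\mathit{value}(S_e \setminus \{r_e\}) < \lambda$, so the ``fresh'' thin value added by the $|Y_i|$ blocking edges is strictly less than $|Y_i| \lambda$, while $S_i$ itself contributes strictly less than $2\lambda$. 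Plugging $\lambda = 6/23$ (so $\lambda/(1-\lambda) = 6/17$), the desired inequality reduces to an affine comparison among $\ell$, $\sum_i |Y_i|$, and the fat-blocker count, which closes by combining $\sum_i |Y_i| \geq \ell$ (non-removability), the strict ``$<$''s from the minimality amortisation, and the free ``$+1$'' from $p_0$. The hardest part is precisely this final arithmetic: without scaling $z_r$ on fat resources down from $v_r/\lambda$ to $1$, fat blockers would contribute negatively to the objective and the surplus from $p_0$ and the thin blockers would be insufficient to absorb them, recovering only the AFS bound of $4$ rather than $23/6$.
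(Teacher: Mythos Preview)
Your overall strategy---assume neither a removable blocker nor an addable edge, then exhibit a feasible dual solution with strictly positive objective---is exactly the paper's, and your coverage bound and feasibility argument are correct. The gap is in the positivity step: with $z_r = v_r/(1-\lambda)$ for thin $r\in R^*$, the objective need not be positive when $\lambda = 6/23$.

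Consider a thin blocker $B_i$ with $|Y_i|=1$ in which $x_i$ covers two thin resources $\{a,b\}$ and the unique blocking edge covers $\{a,c\}$, with $v_a,v_b,v_c$ each just below $\lambda$. Then $R_i=\{a,b,c\}$ has total value arbitrarily close to $3\lambda$, so your $z$-cost is arbitrarily close to $3\lambda/(1-\lambda)=18/17$, while the active-player contribution is $|Y_i|=1$. Such a blocker contributes essentially $-1/17$ to the objective. Carrying out your amortisation in general gives, for the thin blockers,
\[
\sum_{p} y_p - \sum_r z_r \;>\; 1 \;+\; \frac{1}{1-\lambda}\Bigl(\,(1-2\lambda)\!\!\sum_{i\text{ thin}}|Y_i| \;-\; 2\lambda\, t\Bigr),
\]
where $t$ is the number of thin blockers; using only $\sum_{i\text{ thin}}|Y_i|\ge t$ this is $> 1 + t(1-4\lambda)/(1-\lambda) = 1 - t/17$, which is not positive once $t\ge 17$. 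The strict inequalities from minimality do not rescue this, since the slack per blocker can be made arbitrarily small. In short, your dual recovers precisely the threshold $\lambda=1/4$ (the AFS bound of $4$), not $\lambda=6/23$.

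The missing idea is the paper's additional scaling on \emph{thin} resources: set $z_r=\min\{v_r,\tfrac{5}{6}\lambda\}$ for thin $r\in R^*$ (and normalise $y_p=z_r^{\text{fat}}=1-\tfrac{4}{3}\lambda$). This cap is what tames the bad case above: in that case $x_i$ and $e_i$ together cover at most three thin resources, each with $z_r\le \tfrac{5}{6}\lambda$, so the cost is at most $\tfrac{5}{2}\lambda = 1-\tfrac{4}{3}\lambda$, balancing the single active player exactly. The price is that feasibility for all-thin configurations now requires a case split on how many resources of $C\cap R^*$ have $v_r\ge \tfrac{5}{6}\lambda$: three of them already give $\sum z_r\ge 1-\tfrac{4}{3}\lambda$, while with at most two the mass lost to the cap is at most $2\cdot\tfrac{\lambda}{6}=\tfrac{\lambda}{3}$, and the coverage bound $\mathit{value}(C\cap R^*)>1-\lambda$ still suffices. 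Your final paragraph misattributes the improvement: setting the fat weight to a constant is needed merely to make fat blockers net zero (already at $\lambda=1/4$); it is the thin-resource cap that moves the bound from $4$ to $23/6$.
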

\begin{proof}
	Let $M$ and $(B_1, \ldots, B_\ell)$ be the current matching and the sequence of blockers, respectively.  Suppose, for the sake of contradiction, that there is neither a removable blocker nor an addable edge.  We show that the dual of the configuration LP (with respect to $T^* = 1$) is unbounded, which implies that the primal $\mathit{CLP}(T^*)$ is infeasible, a contradiction to the definition of $T^*$.

	Consider the following solution of the dual. For each player $p$ and each resource $r$,
	define the variables 

	\begin{align}
		y^*_p &= \left\{
				\begin{array}{ll}
					1 - \frac{4}{3}\lambda &\text{if $p$ is active,}\\
					0 &\text{otherwise}
				\end{array}
		\right. 
		\label{eq:dual_y}\\
		z^*_r &= \left\{
				\begin{array}{ll}
					0 
						&\text{if $r \notin \mathit{resource}(B_{\leq \ell})$ }\\
					1 - \frac{4}{3}\lambda 
						&\text{if $r\in \mathit{resource}(B_{\leq \ell})$ and $r$ is fat}\\
					\min\{v_r, \frac{5}{6}\lambda\} 
						&\text{if $r\in \mathit{resource}(B_{\leq \ell})$ and $r$ is thin}\\
				\end{array}
				\right.
		\label{eq:dual_z}
	\end{align}

	\vspace{4pt}
	\begin{quote}
	\begin{claim}
		The above solution is feasible, i.e., $y^*_p \leq \sum_{r\in C}z^*_r$ for every player $p\in P$ and every configuration $C\in {\cal C}_p(1)$.
	\label{clm:feasible}
	\end{claim}
	\begin{claim}
		The objective function value of the dual is positive, i.e., $\sum_{p\in P} y^*_p - \sum_{r \in R} z^*_r > 0 $.
	\label{clm:positive-obj}
	\end{claim}
	\end{quote}
	
	Let $P = \{p_1, \ldots, p_m\}$ and $R = \{r_1, \ldots, r_n\}$. Since $(y^*_{p_1},\ldots, y^*_{p_m}, z^*_{r_1}, \ldots, z^*_{r_n})$ is a feasible solution, so is $(\alpha y^*_{p_1},\ldots, \alpha y^*_{p_m}, \alpha z^*_{r_1}, \ldots, \alpha z^*_{r_n})$ for any $\alpha > 0$.  As $\alpha \to +\infty$, the objective value goes to $+\infty$ by Claim~\ref{clm:positive-obj}. Hence, the dual LP is unbounded.
\end{proof}

\begin{proof}[\textbf{Proof of Claim \ref{clm:feasible}}]
	If $p$ is not active, then $y^*_p = 0$. Since $z^*_r$ is non-negative, we have $\sum_{r\in C}z^*_r\geq 0 = y^*_p$ for every $C\in {\cal C}_p(1)$.  Assume $p$ is active. So $y^*_p = 1 - 4\lambda/3$.  Let $C$ be any configuration from ${\cal C}_p(1)$.  We show that $\sum_{r\in C}z^*_r\geq 1 - 4\lambda/3$.

	\begin{enumerate}[{Case} 1.]
		\item 
		$C$ contains at least one fat resource, say $r_1$.  Then $r_1$ must appear in $\mathit{resource}(B_{\leq \ell})$; otherwise, $(p, \{r_1\})$ would be an addable edge, contradicting the assumption that there is no addable edge.  Thus $z^*_{r_1} = 1 - 4\lambda/3$ by \eqref{eq:dual_z}, so $\sum_{r\in C}z^*_r \geq z^*_{r_1} = 1 - 4\lambda/3$.

		\item 
		$C$ contains only thin resources, and $C\cap \mathit{resrouce}(B_{\leq \ell})$ contains at least three thin resources $r_1, r_2, r_3$ of values at least $5\lambda/6$. Given that $\lambda = 6/23$, we obtain 
		\begin{equation*}
			\sum_{r\in C}z^*_r \geq z^*_{r_1} + z^*_{r_2} + z^*_{r_3}
			\geq \frac{5\lambda}{6} * 3 = 1 - 4\lambda/3.
		\end{equation*}

		\item 
		$C$ contains only thin resources, and $C\cap \mathit{resource}(B_{\leq \ell})$ contains at most two thin resources of values at least $5\lambda/6$. Since there is no addable edge, every configuration in ${\cal C}_p(1)$ has less than $\lambda$ worth of thin resources that do not belong to $\mathit{resource}(B_{\leq \ell})$. Therefore, 
		\begin{equation*}
			\sum_{r\in C\setminus \mathit{resource}(B_{\leq \ell})} v_r < \lambda.
		\end{equation*}
		Note that $value(C) \geq T^* = 1$ by the definition of ${\cal C}_p(1)$. Therefore, 
		\begin{equation*}
			\sum_{r\in C\cap \mathit{resource}(B_{\leq \ell})} v_r
			= value(C) - \sum_{r\in C\setminus \mathit{resource}(B_{\leq \ell})} v_r
			> 1 - \lambda.
		\end{equation*}
		Recall that $C$ contains only thin resources and $C\cap \mathit{resource}(B_{\leq \ell})$ contains at most two thin resources of values at least $5\lambda/6$. According to~\eqref{eq:dual_z}, for every $r \in C\cap resource(B_{\leq \ell})$, we set $z^*_r:= \min\{v_r, 5\lambda/6\}$ . At most two resources in $C\cap \mathit{resource}(B_{\leq \ell})$ have their $z^*_r = 5\lambda/6 < v_r$, and any other resource in $C\cap \mathit{resource}(B_{\leq \ell})$ has $z^*_r = v_r$.  Hence, for $C\cap \mathit{resource}(B_{\leq \ell})$, the total difference between $v_r$'s and $z^*_r$'s is at most $\lambda/6*2 = \lambda/3$.  Therefore, 
		\begin{equation*}
			\sum_{r\in C}z^*_r \geq \sum_{r\in C\cap \mathit{resource}(B_{\leq \ell})} z^*_r
			\geq \sum_{r\in C\cap \mathit{resource}(B_{\leq \ell})} v_r - \lambda/3 > 1 - \lambda - \lambda/3 = 1 - 4\lambda/3.
		\end{equation*}
	\end{enumerate}
\end{proof}

\begin{proof}[\textbf{Proof of Claim \ref{clm:positive-obj}}]
	For $i\in [1,\ell]$, let $P_i$ be the set of players activated by blocker $B_i = (x_i, Y_i)$.  Note that $P_i$ is exactly the set of players covered by edges in $Y_i$. For $i\in [1,\ell]$, let $R_i$ be the set of resources covered by $\{x_i\}\cup Y_i$. By Lemma~\ref{lem:invar}, all $P_i$'s are mutually disjoint, so are all $R_i$'s. Since $y^*_p = 0$ if $p$ is inactive, we have
	\begin{equation*}
		\sum_{p\in P}y^*_p = y^*_{p_0} + \sum_{i\in [1,\ell]}\sum_{p\in P_i} y^*_p.
	\end{equation*}
	By~\eqref{eq:dual_z}, $z^*_r = 0$ if $r\notin \mathit{resource}(B_{\leq \ell})$. Therefore, 
	\begin{equation*}
		\sum_{r \in R} z^*_r = \sum_{i\in [1,\ell]}\sum_{r\in R_i} z^*_r.
	\end{equation*}
	Combining the above two equations, we obtain
	\begin{equation*}
		\sum_{p\in P} y^*_p - \sum_{r \in R} z^*_r = y^*_{p_0} 
			+ \sum_{i\in [1,\ell]}\left(\sum_{p\in P_i} y^*_p - \sum_{r\in R_i} z^*_r\right).
	\end{equation*}
	Player $p_0$ is always active, so $y^*_{p_0} = 1 - 4\lambda/3 > 0$ by~\eqref{eq:dual_y}. To prove that $\sum_{p\in P} y^*_p - \sum_{r \in R} z^*_r > 0$, it suffices to show that for each blocker $B_i = (x_i, Y_i)$, 
		$\sum_{p\in P_i} y^*_p \geq \sum_{r\in R_i} z^*_r$. 
	Note that $Y_i \neq \emptyset$ for all $i\in [1,\ell]$ because there is no removable blocker.

	\begin{enumerate}[{Case} 1.]
		\item 
		$x_i$ is a fat edge.  Then $Y_i$ must contain exactly one fat edge $(q, \{r_i\})$,
		and $r_i$ is exactly the fat resource covered by $x_i$.  It follows that $P_i = \{q\}$ and $R_i = \{r_i\}$. We have that
		\begin{equation*}
		\sum_{r\in R_i} z^*_r = z^*_{r_i} \overset{\eqref{eq:dual_z}}{=} (1 - 4\lambda/3) 
								 \overset{\eqref{eq:dual_y}}{=} y^*_{q} = \sum_{p\in P_i} y^*_p.
		\end{equation*}

		\item 
		$x_i$ is a thin edge and $|Y_i| \geq 2$.  By the minimality of $x_k$, the total value of resources covered by $x_i$ is less than $2\lambda$.  The blocking edges in $Y_i$ must be thin. By the minimality, for each blocking edge in $Y_i$,
		the total value of its resources that are not covered by $x_k$ is less than $\lambda$. 
		Hence, given that $|Y_i| \geq 2$ and $\lambda = 6/23$, we have
		\begin{equation*}
			\sum_{r \in R_i}z^*_r
			\overset{\eqref{eq:dual_z}}{\leq} \sum_{r\in R_i} v_r 
			< 2\lambda + \lambda |Y_i|
			\leq 2\lambda|Y_i| 
			\leq (1 - 4\lambda/3)|Y_i|
			= \sum_{p\in P_i} y^*_p.
		\end{equation*}

		\item 
		$x_i$ is a thin edge and $|Y_i| = 1$.  $P_i$ contains only one player, so $\sum_{p\in P_i}y^*_p = 1 - 4\lambda/3$ by~\eqref{eq:dual_y}.  Let $e_i$ denote the only blocking edges in $Y_i$.  Note that $e_i$ must be a thin edge.  It suffices to show 
		\begin{equation*}
			\sum_{\substack{\text{$r$ covered} \\\text{by $x_j$ or $e_j$}}}z^*_r \leq 1 - 4\lambda/3.
		\end{equation*} 
\end{enumerate}

\begin{enumerate}[{Case 3.}1]
		\item $x_i$ covers at least one thin resource of value at most $\lambda/2$. 
		Due to the minimality of $x_i$, we have
		\begin{equation*}
			\sum_{\text{$r$ covered by $x_i$}}v_r < \lambda + \lambda/2 = 3\lambda/2.
		\end{equation*}
		Due to the minimality of $e_i$, $e_i$ has less than $\lambda$ worth of resources that are not covered by $x_i$, that is,
		\begin{equation*}
			\sum_{\substack{\text{$r$ covered by} \\ \text{$e_i$ but not $x_i$} }}v_r < \lambda.
		\end{equation*}
		Putting things together, we get
		\begin{align*}
	 	\sum_{\substack{\text{$r$ covered} \\\text{by $x_j$ or $e_j$}}}z^*_r 
	 	\overset{\eqref{eq:dual_z}}{\leq} \sum_{\substack{\text{$r$ covered} \\\text{by $x_j$ or $e_j$}}}v_r 
	 	&= \sum_{\text{$r$ covered by $x_i$}} v_r + \sum_{\substack{\text{$r$ covered by} \\ \text{$e_i$ but not $x_i$} }} v_r \\
	 	&< 3\lambda/2 + \lambda\\
	 	& = 5\lambda/2\\
	 	& = 1 - 4\lambda/3.
		\end{align*}

		\item $e_i$ covers at least one thin resources of value at most $\lambda/2$.  This case can be handled by the same analysis in case 3.1 with the roles of $x_i$ and $e_i$ switched.

		\item All the resources covered by $x_i$ or $e_i$ have value greater than $\lambda/2$.  Two such resources already have total value greater than $\lambda$. Therefore, by the minimality of $x_i$ and $e_i$, each of them covers at most two resources.  Since $e_i$ blocks $x_i$, they have at least one resource in common.  As a result, $x_i$ and $e_i$  covers at most three resources together.  Recall that $x_i$ and $e_i$ are thin edges, 
		so every resource $r$ covered by them is thin and $z^*_r \leq 5\lambda/6$ according to~\eqref{eq:dual_z}. Therefore,  
		\begin{equation*}
			\sum_{\substack{\text{$r$ covered} \\\text{by $x_j$ or $e_j$}}}z^*_r 
			\leq 5\lambda/6 * 3  
			= 5\lambda/2 
			= 1 - 4\lambda/3.
		\end{equation*}
\end{enumerate}
\end{proof}

\begin{lemma}
The algorithm terminates in finite steps.
\label{lem: terminate}
\end{lemma}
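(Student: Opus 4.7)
The plan is to establish termination via a lexicographic potential on the blocker sequence. To the current sequence $(B_1, \ldots, B_\ell)$ I would assign the signature
\[
\sigma = (|Y_1|, |Y_2|, \ldots, |Y_\ell|, \infty, \ldots, \infty),
\]
padded with $\infty$'s to a common length $N$. The goal is to show that $\sigma$ strictly decreases lexicographically whenever either {\sc Build} or {\sc Contract} is called, and that $\sigma$ lies in a finite set. Combined with Lemma~\ref{lem:never-stuck}, which guarantees that one of the two routines can always be invoked until $p_0$ is matched, this forces termination after finitely many steps.

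For {\sc Build}, only position $\ell+1$ changes, from the sentinel $\infty$ to the finite value $|Y_{\ell+1}|$, so $\sigma$ drops lexicographically. For {\sc Contract}, let $B_k$ be the smallest-index removable blocker and $B_j$ the activator (with $j < k$) of the player covered by $x_k$. Positions $1, \ldots, j-1$ are preserved, position $j$ changes from $|Y_j|$ to $|Y_j|-1$, and positions beyond $j$ are reset to $\infty$. The first differing coordinate is $j$, where the value strictly decreases, so again $\sigma$ drops lexicographically regardless of what happens at later positions.

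To bound the signature space I would invoke Lemma~\ref{lem:invar}. By part (iii), $Y_1, \ldots, Y_\ell$ are disjoint subsets of $M$, so each $|Y_i| \leq |M| < m$. By part (i), the edges $x_1, \ldots, x_\ell$ are pairwise resource-disjoint and each covers at least one resource, so $\ell \leq n$; moreover if $\ell = n$ then every resource is already used by some $x_i$, ruling out any addable edge, so {\sc Build} cannot push $\ell$ beyond $n$. Choosing $N = n$, the signature takes values in the finite set $(\{0, 1, \ldots, m-1\} \cup \{\infty\})^n$, and strict lex decrease can therefore occur only finitely many times.

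There is no real obstacle; the argument is essentially bookkeeping. The only delicate point is crafting a single potential that accommodates both the append operation of {\sc Build} and the truncate-and-decrement operation of {\sc Contract}, which the $\infty$-padding handles uniformly by forcing appended finite coordinates and decremented coordinates to be interpreted as strict decreases against the same baseline.
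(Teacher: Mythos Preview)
Your proposal is correct and follows essentially the same approach as the paper: both assign to the blocker sequence the signature $(|Y_1|,\ldots,|Y_\ell|)$ padded with $\infty$, and argue that {\sc Build} and {\sc Contract} each strictly decrease it lexicographically. The only minor difference is in bounding the signature space: the paper uses Lemma~\ref{lem:invar}(iii) to get $\sum_i |Y_i|\le |M|\le m$ and hence at most $O(e^{\sqrt{m}})$ signatures via integer partitions, whereas you bound each coordinate separately and cap the length by $\ell\le n$, yielding a cruder but equally valid finiteness bound.
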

\begin{proof}
With respect to the sequence of blockers $(B_1, B_2, \ldots, B_\ell)$, 
we define a signature vector $(|Y_1|, |Y_2|, \ldots, |Y_{\ell}|, \infty)$.  The signature vector evolves as the sequence is updated by the algorithm.  After each invocation of {\sc Build}, the signature vector decreases lexicographically because a new second to last entry is gained.  After each invocation of {\sc Contract}, the signature decreases lexicographically because it becomes shorter, and the second to last entry decreases by at least $1$.  By Lemma~\ref{lem:invar}, $Y_1, Y_2, \ldots, Y_\ell$ are subsets of $M$ that are mutually disjoint. Hence, $|Y_1| + \cdots + |Y_\ell| \leq |M| \leq m$.  Therefore, the number of distinct signature vectors is at most the number of distinct partitions of the integer $m$, which is bounded from above by $O(e^{\sqrt{m}})$~\cite{HR1918}. As a result, the algorithm terminates after at most $O(e^{\sqrt{m}})$ invocations of {\sc Build} and {\sc Contract}.
\end{proof}

By Lemma~\ref{lem:never-stuck} and Lemma~\ref{lem: terminate}, we can always increase the size of the current matching $M$ until it becomes a perfect matching. This proves Theorem~\ref{thm:matching}.  Given a perfect matching of $G$, every player already gets at least $\lambda$ worth of resources.  For the resources not used by the perfect matching, we can allocate them arbitrarily since they have non-negative values.  This results in an allocation where each player receives at least $\lambda$ worth of resources, and Theorem~\ref{thm:main} is proved.

\section{Conclusion}
We have showed that the integrality gap of configuration LP for restricted max-min fair allocation is at most $23/6$, improving the previous upper bound of $4$. In the analysis of the construction algorithm, one of the bottlenecks is the irregularity of the value of the thin edges and the configurations, which results from the irregularity of the value of thin resources.  We partially resolve this issue by using the dual of the configuration LP.  We believe that the upper bound can be further improved if one can derive a better analysis.  Meanwhile, the construction algorithm itself has a few places that can be improved: (1) when there are more than one addable edges, it arbitrarily picks one. Does it help if we have some preference over addable edges?  (2) it restricts the resources of an addable edge to be completely disjoint from those that are already covered by the blockers in the sequence. This is unnecessary in some situations.  A more sophisticated construction may lead to a better bound.

\bibliography{paper}

\begin{thebibliography}{10}

\bibitem{AKS17}
C.~Annamalai, C.~Kalaitzis, and O.~Svensson.
\newblock Combinatorial algorithm for restricted max-min fair allocation.
\newblock {\em {ACM} Transactions on Algorithms}, 13(3):37:1--37:28, 2017.

\bibitem{AFS12}
A.~Asadpour, U.~Feige, and A.~Saberi.
\newblock Santa claus meets hypergraph matchings.
\newblock {\em {ACM} Transactions on Algorithms}, 8(3):24:1--24:9, 2012.

\bibitem{AS07}
A.~Asadpour and A.~Saberi.
\newblock An approximation algorithm for max-min fair allocation of indivisible
  goods.
\newblock In {\em Proceedings of the 39th ACM Symposium on Theory of
  Computing}, pages 114--121, 2007.

\bibitem{BS06}
N.~Bansal and M.~Sviridenko.
\newblock The santa claus problem.
\newblock In {\em Proceedings of the 38th ACM Symposium on Theory of
  Computing}, pages 31--40, 2006.

\bibitem{BD05}
I.~Bez\'{a}kov\'{a} and Varsha Dani.
\newblock Allocating indivisible goods.
\newblock {\em SIGecom Exchanges}, 5(3):11--18, 2005.

\bibitem{CTW16}
T.-H.H. Chan, Z.G. Tang, and X.~Wu.
\newblock On (1, epsilon)-restricted max-min fair allocation problem.
\newblock In {\em Proceedings of the 27th International Symposium on Algorithms
  and Computation}, pages 23:1--23:13, 2016.

\bibitem{CM18}
S.{-}W. Cheng and Y.~Mao.
\newblock Restricted max-min fair allocation.
\newblock {\em To Appear in ICALP2018}.

\bibitem{F08}
U.~Feige.
\newblock On allocations that maximize fairness.
\newblock In {\em Proceedings of the 19th ACM-SIAM Symposium on Discrete
  Algorithms}, pages 287--293, 2008.

\bibitem{HSS11}
B.~Haeupler, B.~Saha, and A.~Srinivasan.
\newblock New constructive aspects of the lov{\'{a}}sz local lemma.
\newblock {\em Journal of the {ACM}}, 58(6):28:1--28:28, 2011.

\bibitem{HR1918}
G.H. Hardy and S.~Ramanujan.
\newblock Asymptotic formula{\ae} in combinatory analysis.
\newblock {\em Proceedings of the London Mathematical Society}, 2(1):75--115,
  1918.

\bibitem{H95}
P.E. Haxell.
\newblock A condition for matchability in hypergraphs.
\newblock {\em Graphs and Combinatorics}, 11(3):245--248, 1995.

\bibitem{JR17}
K.~Jansen and L.~Rohwedder.
\newblock On the configuration-lp of the restricted assignment problem.
\newblock In {\em Proceedings of the 28th Annual {ACM-SIAM} Symposium on
  Discrete Algorithms}, pages 2670--2678, 2017.

\bibitem{LST90}
J.K. Lenstra, D.B. Shmoys, and {\'{E}}.~Tardos.
\newblock Approximation algorithms for scheduling unrelated parallel machines.
\newblock In {\em Proceedings of the 28th IEEE Symposium on Foundations of
  Computer Science}, pages 217--224, 1987.

\bibitem{SS10}
B.~Saha and A.~Srinivasan.
\newblock A new approximation technique for resource-allocation problems.
\newblock In {\em Proceedings of the 1st Symposium on Innovations in Computer
  Science}, pages 342--357, 2010.

\bibitem{S12}
O.~Svensson.
\newblock Santa claus schedules jobs on unrelated machines.
\newblock {\em {SIAM} Journal on Computing}, 41(5):1318--1341, 2012.

\end{thebibliography}

\end{document}